\newcolumntype{M}[1]{>{\centering\arraybackslash}m{#1}}
\newcolumntype{C}{>{\centering\arraybackslash}X} % centered version of "X" type
\DeclareMathAlphabet{\pazocal}{OMS}{zplm}{m}{n}
\DeclareSymbolFont{missing}{OML}{cmr}{m}{n}
\DeclareMathSymbol{\ell}{\mathord}{missing}{'140}
\pgfplotsset{grid style={dotted,gray}}
\DeclareMathOperator*{\argmax}{arg\,max}
\newcommand{\maximize}{%
  \mathopen{}\operatorname*{maximize}%
}
\newcommand{\subjto}{\textup{subject to}}
\tikzset{reset label anchor/.code={%
    \let\tikz@auto@anchor=\pgfutil@empty
    \def\tikz@anchor{#1}
  },
  reset label anchor/.default=center
}
\tikzstyle{printersafe}=[decoration={amplitude=0pt}]
\tikzstyle{printersafe}=[segment amplitude=0 pt]
\newsavebox{\mybox}
\pgfplotsset{compat=1.16}
\begin{document}

%\bstctlcite{IEEEexample:BSTcontrol}

\titlerunning{Multi-Agent DRL for V2X}

\title{Network Slicing for Vehicular Communications: A Multi-Agent Deep Reinforcement Learning Approach}

\author{Zoubeir~Mlika \and Soumaya~Cherkaoui}%
\institute{Z. Mlika \at
              Department of Electrical and Computer Engineering\\
              Université de Sherbrooke, Canada \\
              \email{zoubeir.mlika@usherbrooke.ca}%
           \and
           S. Cherkaoui \at
              Department of Electrical and Computer Engineering\\
              Université de Sherbrooke, Canada \\
              \email{soumaya.cherkaoui@usherbrooke.ca}%
}
\maketitle

\begin{abstract}
  This paper studies the multi-agent resource allocation problem in vehicular networks using non-orthogonal multiple access (NOMA) and network slicing. Vehicles want to broadcast multiple packets with heterogeneous quality-of-service (QoS) requirements, such as safety-related packets (e.g., accident reports) that require very low latency communication, while raw sensor data sharing (e.g., high-definition map sharing) requires high-speed communication. To ensure heterogeneous service requirements for different packets, we propose a network slicing architecture. We focus on a non-cellular network scenario where vehicles communicate by the broadcast approach via the direct device-to-device interface (i.e., sidelink communication). In such a vehicular network, resource allocation among vehicles is very difficult, mainly due to (i) the rapid variation of wireless channels among highly mobile vehicles and (ii) the lack of a central coordination point. Thus, the possibility of acquiring instantaneous channel state information to perform centralized resource allocation is precluded. The resource allocation problem considered is therefore very complex. It includes not only the usual spectrum and power allocation, but also coverage selection (which target vehicles to broadcast to) and packet selection (which network slice to use). This problem must be solved jointly since selected packets can be overlaid using NOMA and therefore spectrum and power must be carefully allocated for better vehicle coverage. To do so, we first provide a mathematical programming formulation and a thorough NP-hardness analysis of the problem. Then, we model it as a multi-agent Markov decision process. Finally, to solve it efficiently, we use a deep reinforcement learning (DRL) approach and specifically propose a deep Q learning (DQL) algorithm. The proposed DQL algorithm is practical because it can be implemented in an online and distributed manner. It is based on a cooperative learning strategy in which all agents perceive a common reward and thus learn cooperatively and distributively to improve the resource allocation solution through offline training. We show that our approach is robust and efficient when faced with different variations of the network parameters and compared to centralized benchmarks.
\end{abstract}

\keywords{Network slicing \and Vehicle-to-vehicle \and non-orthogonal multiple access \and Resource allocation \and Deep Q learning \and Deep reinforcement learning.}

\section{Introduction}\label{sec:intro}
Vehicle-to-everything (V2X) communication is an integral part of future intelligent transportation systems (ITS) and has become a key communication paradigm in future wireless networks~\cite{triwinarko2021phy}. Indeed, V2X communication is one of the key verticals in fifth generation (5G) networks~\cite{8459911,doi:10.1002/9781119514848.ch9}. In 3rd generation partnership project (3GPP) release 16, the new radio cellular-V2X (NR C-V2X) communication supports two transmission modes: a direct mode with NR C-V2X sidelink, and an over-the-cellular network mode, that together, provide seamless connectivity to vehicles through a unified radio. V2X communication enables information sharing through vehicle-infrastructure (V2I), vehicle-to-pedestrian, vehicle-to-vehicle (V2V), or vehicle-to-cellular network (V2N) communication. V2X communication covers a range of use cases~\cite{alalewi20215g} that can broadly be divided into safety services, non-safety services and infotainment services. Safety services are the main use case of V2X communication, and aim to improve road safety by enabling road users and/or infrastructure to exchange information in a timely manner to avoid accidents. Non-safety services are offered by ITS to improve traffic management and efficiency. Infotainment services include other value-added services for vehicle users such as video streaming or content sharing~\cite{8723326,doi:10.1002/ett.3652}. Despite the convenience of a unified radio, the diverse service requirements of vehicular applications make the problem of resource allocation in V2X communication very challenging~\cite{7564925,7511574,7577072,7585028}. A key challenge is to support future 5G vehicular services with extremely diverse performance requirements using a unified radio interface. Network slicing is a potential solution to solve this key challenge~\cite{8459911,doi:10.1002/ett.3652}. Network slicing belongs to software defined networking~\cite{8004158} and is a tool that allows network operators to support virtualized end-to-end networks by creating different logical networks on top of a common, programmable physical infrastructure.

    In this paper, we propose a novel solution to solve the challenging problem of resource allocation in a 5G vehicular network using network slicing. Specifically, we focus on V2V sidelink communication without the assistance of a base station such as a 5G-NR next generation NodeB (gNB). Network slicing in such scenarios is more challenging because the proposed solutions must be distributed and implemented independently in each vehicle. In V2V communication, 5G-NR has defined different ways to send information, including broadcast, unicast and groupcast communication techniques~\cite{3gpp.38.885}. In this article, the broadcast technique is considered. On the other hand, non-orthogonal multiple access (NOMA) is a promising technique for increasing spectrum efficiency in wireless networks in general~\cite{8357810} and in vehicular networks in particular~\cite{8246842}. NOMA has been shown to be effective in providing broadband communications and massive connectivity because it allows multiple transmitting devices to share common spectrum resources (in time and/or frequency)~\cite{7974737}.

    In this paper, we propose network slicing-based resource allocation framework using NOMA for V2V broadcast communication. We answer the following important question: ``how to distributively allocate the different resources in network slicing-based vehicular network in order to guarantee the latency, reliability and rate requirements of the different slices?''.

    The problem considered, hereafter called vehicle resource allocation (VRA), is a four-dimensional problem that involves the allocation of four resources corresponding to the following four decisions: deciding which packet to send, deciding the broadcast range, deciding which resource blocks (RBs) to use for transmission, and deciding the transmission power. The use of NOMA is attractive for the development of distributed solutions because some RBs can be reused by multiple transmitting vehicles and thus there is no need to pay attention to collisions. By carefully allocating different vehicle resources (slice, coverage, RBs, and power) and applying successive interference cancellation (SIC) to the corresponding receivers, NOMA can help increase the capacity of various vehicle network applications. It should be noted that the use of NOMA in broadcast communications is different from the usual NOMA technique in wireless uplink and downlink networks and is more difficult to apply. This is due to the nature of broadcasting in vehicle networks: two transmitting vehicles broadcasting with two different transmission powers to the \textit{same} group of receiving vehicles must carefully distribute their transmission powers so that the corresponding receivers can successfully apply SIC.

\subsection{Contributions}
    To the best of our knowledge, this is the first work that distributively solves VRA in 5G NR C-V2X sidelink communication based on network slicing and NOMA. To do so, we apply deep reinforcement learning (DRL)~\cite{9318243,8943940,abouaomar2021deep}. In general, deep learning (DL) has had significant success in various disciplines and is applied differently to various problems in vehicular networks~\cite{8345672} and reinforcement learning~\cite{9383093,abouaomar2021service} helps in solving complex problems. With the help of DRL, we provide a distributed and model-free solution to VRA. On the other hand, we propose a model-based formulation using integer programming and analyze the NP-hardness of VRA. The contributions of our work are summarized in the following list.
\begin{itemize}
    \item We formulate VRA as a mixed integer nonlinear program.
    \item We prove that VRA is NP-hard even in its simplest form.
    \item We model VRA as a multi-agent Markov decision process. 
    \item We propose a deep Q learning (DQL) algorithm to provide an online and distributed resource allocation solution that incites, through proper reward function design, the agents (the vehicles) to work cooperatively.
    \item We compare DQL to optimal and centralized benchmark algorithms that we have adapted to solve VRA and we demonstrate the superior performance of DQL.
\end{itemize}

\subsection{Related Work}

    Di \textit{et al.}~\cite{7974737} used NOMA to solve the RB and power allocation problem in a V2X communication network where the objective is to maximize the number of successfully decoded signals subject to scheduling and fairness constraints. The proposed solution is applied in two phases: a centralized phase and a distributed phase. In the centralized phase, the authors solved the RB allocation problem by transforming it into an assignment problem and propose a rotation-based algorithm. In the distributed phase, the authors proposed a distributed algorithm to find the transmission power of vehicles to improve the performance of NOMA. Liang \textit{et al.}~\cite{7922547} focused on maximizing the sum rate of V2I links while guaranteeing minimum quality of service (QoS) for V2I and V2V links. The minimum QoS of V2V links ensured reliable communication by imposing a minimum threshold value for the outage probability of the achievable rate. The authors assumed that the channel state information (CSI) at the base station is delayed due to high mobility. The authors decoupled the problem by considering the case of a single V2I user and a single V2V user and derived the power allocation vector. Then, using the allocated power, they found the spectrum allocation by transforming the problem into an assignment problem and used the Hungarian algorithm to solve it. The proposed solution is based on a central coordination point and does not include network slicing or vehicle coverage optimization. Nasir \textit{et al.}~\cite{8792117} solved the classical power allocation problem in an interference channel using model-free DRL. Particularly, the proposed solution can be summarized as follows. After collecting CSI and QoS information, each transmitter adapts its own transmit power accordingly. The objective is, as usual, to maximize the total weighted sum-rate. The proposed method used DQL to develop a distributed solution. Liang \textit{et al.}~\cite{8792382} proposed a multi-agent DRL framework to solve the challenging problem of resource allocation in a V2X communication network. In particular, the problem involved spectrum sharing between V2I links and V2V links and the objective was to ensure ultra-low latency communication for V2V links and high throughput for V2I links. The proposed solution is based on DQL to provide a distributed solution for the RB and power allocation problem. Campolo \textit{et al.}~\cite{10.1145/3331054.3331549} used the LTE V2X Mode 4 scenario to improve the decoding probability of V2V links while considering full duplex radios. The overall objective was to improve the existing approach used in LTE V2X Mode 4 and specifically the work aims to improve resource allocation and reduce collisions and packet decoding. However, the resource allocation should be updated every period based on the collision history which may increase the overhead. Liang \textit{et al.}~\cite{7913583} examined resource allocation in vehicular device-to-device communications. In particular, the authors solved the problem of spectrum and power allocation while maximizing the throughput of V2I links and ensuring the reliability of V2V links. Zhang \textit{et al.}~\cite{10.1007/978-3-030-22971-9_38} studied a vehicle network using multi-access edge computing (MEC) with backup cloud servers. They proposed a DQL algorithm to offload vehicle tasks to the MEC servers to minimize the total processing time of vehicle tasks. The results showed superiority in terms of processing time, especially when backup cloud servers are considered. Sharif \textit{et al.}~\cite{sharif2021a} studied the problem of clustering vehicles in Internet of vehicles (IoV) networks. They proposed a DRL-based approach to select cluster heads for resource allocation among vehicles. Their approach is based on the DRL actor-critic method where the policy and the value function are separated. The policy is called the actor because it selects actions while the value function is called the critic because it criticizes the actions chosen by the actor. Their approach was compared to static and DQN-based approaches and found to outperform them, especially in terms of convergence time, i.e., the actor-critic approach required fewer iterations to achieve better throughput. Zhang \textit{et al.}~\cite{8944302} studied the problem of mode selection (V2V mode or V2I mode) and resource allocation (RB and power) in V2X communication networks. The problem is formulated to maximize the total capacity of V2I users and guarantee the latency and reliability of V2V users. The authors proposed a multi-agent DRL approach in a two-time scale architecture. In a large time scale, a graph-theoretic based vehicle clustering algorithm is proposed. In a small time scale, vehicles in the same group cooperate to form a single DRL model based on federated learning. Ye \textit{et al.}~\cite{8633948} considered RB and power allocation in V2X communication networks. The aim was to design a multi-agent DRL algorithm where V2V agents learn to reduce interference with all V2I links and with other V2V links while meeting their latency requirements. A DQL algorithm is proposed with low communication overhead.%Ren \textit{et al.}~\cite{a1} discussed vehicle clustering approaches to improve the network scalability and the connection reliability. Different clustering techniques were explored including cluster head selection, cluster formation, and cluster maintenance.

    In the above previous work, the integration of network slicing and NOMA in V2V communication networks with the consideration of broadcast communication in 5G-NR C-V2X assisted sidelink communication scenario is not well analyzed and solved. The challenging multidimensional problem of coverage, slice, RB, and power optimization is not solved distributively and without the help of cellular network in previous work. In this work, we fill this research gap by first proposing a mathematical programming model and then studying its NP-hardness. Then, we use DRL to propose a distributed multi-agent solution to this multidimentional problem.

\subsection{Organization}
    The paper is organized as follows. Section~\ref{sec:mod} presents the model including the system assumptions and the mathematical programming model and it studies the NP-hardness of the problem. Section~\ref{sec:sol} presents the multi-agent deep reinforcement learning framework and describes the proposed algorithm. Section~\ref{sec:sim} presents centralized benchmark algorithmic solutions and gives some simulation results. Finally, section~\ref{sec:cl} draws some conclusions.

\section{Model}\label{sec:mod}
\subsection{System Model}
    We consider a vehicular network composed of $n+m$ vehicles. A set $\mathscr{V}$ of $m$ vehicles generates traffic and is called transmitting vehicles. The remaining vehicles form a set $\mathscr{W}$ of $n$ receiving vehicles. All vehicles operate in the 5G-NR C-V2X-assisted sidelink communication scenario~\cite{3gpp.38.885} and thus communicate with each others via the direct device-to-device interface. The transmitting vehicles transmit according to the broadcast communication approach~\cite{9088326}. Since transmitting vehicles are not able to broadcast to all other vehicles, coverage selection must be optimized. In other words, a transmitting vehicle $v\in\mathscr{V}$ can broadcast only to a subset $\mathscr{W}_v\subseteq\mathscr{W}$ of the receiving vehicles. The proposed resource allocation is carried out autonomously by the vehicles according to the autonomous mode of 5G-NR~\cite{9088326}. The total time duration is divided into a set $\mathscr{T}$ of $T$ time-slots of length $\tau$ seconds each. The total transmission bandwidth is divided into a set $\mathscr{F}$ of $F$ frequency-slots. A resource block (RB) is given by the pair $(f,t)$ for each frequency-slot $f$ and time-slot $t$~\cite{3gpp.36.211}.

    Several use cases can be supported by our model, including cooperative communication using extended sensors~\cite{9088326}. To be able to provide guaranteed heterogeneous service requirements of these use cases, we propose a communication architecture based on network slicing. Network slicing is an effective solution to satisfy the requirements of various use cases of wireless networks in general and vehicular networks in particular~\cite{doi:10.1002/ett.3652}. This is done primarily by creating logical networks on top of a common, programmable infrastructure~\cite{doi:10.1002/ett.3652}. It is known that network slicing involves the core network (CN) as well as the radio access network (RAN). In this paper, we consider network slicing in the RAN only. In our system model, we create two slices. The first slice is reserved for non-safety applications and is called the non-safety slice (or slice 1). It is designed primarily to support non-safety related traffic that is characterized by high-throughput transmissions, e.g., video streaming. On the other hand, the second slice is reserved for safety applications and is referred to as the safety slice (or slice 2), which is designed primarily for safety-related traffic that is characterized by extremely low latency requirements~\cite{a2}, e.g., emergency alerts and accident reports. An example of our system model is given in Fig.~\ref{sysmod1}.
\begin{figure}[ht!]
  \centering
  \includegraphics[scale=0.4]{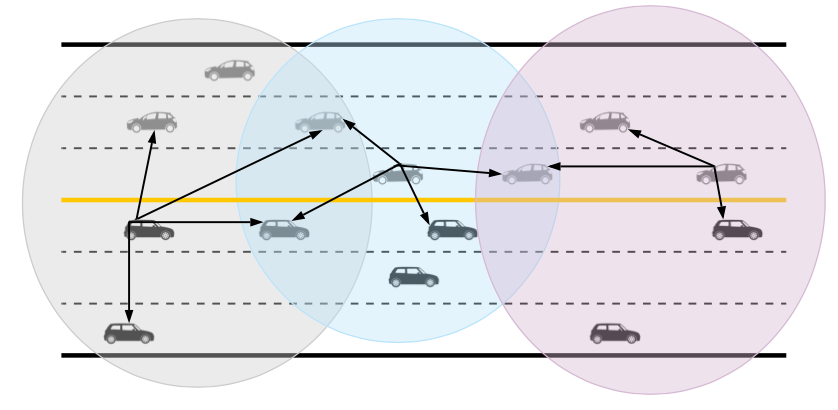}
\caption{An example of the system model with three transmitting vehicles.}
\label{sysmod1}
\end{figure}

Each transmitting vehicle $v\in\mathscr{V}$ generates two packets of size [in bits] $\sigma_{v}^\textsf{n}$ and $\sigma_{v}^\textsf{s}$. Without loss of generality, the packet sizes are called packet requirements. Each packet corresponds to a specific traffic (either non-safety related traffic or safety related traffic). Based on our network slicing architecture, a non-safety related packet is transmitted on the logical non-safety network slice and requires a high throughput. On the contrary, a safety-related packet is transmitted on the logical safety network slice and thus has strict latency requirements. We call these two packets ``non-safety packet'' and ``safety packet'' and refer to them as $\textsf{pkt}_{v}^\textsf{n}$ and $\textsf{pkt}_{v}^\textsf{s}$ respectively. For a packet to be successfully transmitted, the corresponding transmitting vehicle must be assigned a number of RBs such that the number of achievable bits on those assigned RBs is at least equal to the requirements of the corresponding packet. Note that the non-safety packet $\textsf{pkt}_{v}^\textsf{n}$ corresponding to the vehicle $v$ can be assigned any of $\mathscr{F}\times\mathscr{T}$ RBs so that its requirements are met. On the other hand, the safety packet $\textsf{pkt}_{v}^\textsf{s}$ arrives during the considered time horizon at a specific time-slot $s_{v}$. Due to the strict latency requirements of $\textsf{pkt}_{v}^\textsf{s}$, we associate with it a positive integer $e_v$, where $s_v\leq e_v\leq T$, that denotes its deadline. Thus, $\textsf{pkt}_{v}^\textsf{s}$ must meet its requirements by scheduling it on any frequency-slot $f$ such that for the RB $(f,t)$, the time-slot $t$ belongs to $\{s_v.. e_v\}$\footnote{The notation $\{a..b\}$ with $b\geq a$ denotes the integer interval between $a$ and $b$ (inclusive).}, i.e., the admissible set of RBs belongs to $\mathscr{F}\times\{s_v..e_v\}$.

The normalized power gain of the wireless channel between $v\in\mathscr{V}$ and $w\in\mathscr{W}$ on the RB $(f,t)$ is given by $g_{v,w,f}(t)=|h_{v,w,f}(t)|^2/N_0$, where $N_0$ is the noise power and $h_{v,w,f}(t)$ is the channel coefficient that includes fast and slow fading. Each $v\in\mathscr{V}$ transmits $\textsf{pkt}_{v}^i$ (for $i\in\{\textsf{s},\textsf{n}\}$) with a transmission power $p_{v,f}^i(t)$ through RB $(f,t)$ that belongs to $[0,p_v^{\textsf{max}}]$.

In the considered system model, the selection of vehicle coverage, i.e., the selection of the receiving vehicles targeted for broadcast, is a significant challenge. According to~\cite{3gpp.37.885}, a transmitting vehicle transmits to all intended receiving vehicles within a certain distance, and this distance must be carefully designed to improve the performance of vehicular networks. The selection of vehicle coverage can be seen as vehicle clustering that was recently discussed in~\cite{a1}. Thus, for successful packet transmission, a transmitting vehicle $v\in\mathscr{V}$ must: (i) decide whether to send both packets or only the safety packet, (ii) decide which receiving vehicles to broadcast to, (iii) allocate the necessary RBs, and (iv) allocate the transmission powers $p_{v,f}^\textsf{s}(t)$ and $p_{v,f}^\textsf{n}(t)$ for each allocated RB $(f,t)$. To improve the spectral efficiency of the system, non-orthogonal multiple access (NOMA) can be used to overlay transmissions from the transmitting vehicles to a common receiving vehicle. The common receiving vehicle then uses successive interference cancellation (SIC) to decode the superimposed transmissions. For simplicity, if a transmitting vehicle transmits both packets, it uses orthogonal RBs, i.e., NOMA is applied in different transmitting vehicles and not in the transmission of the same transmitting vehicle.

\subsection{VRA Optimization Model}
Let $x_{v,w,f}^\textsf{n}(t)=1$ (resp. $x_{v,w,f}^\textsf{s}(t)=1$) if and only if vehicle $v$ transmits to vehicle $w$ the non-safety packet (resp. the safety packet) on the RB $(f,t)$ and $y_{v,w}^\textsf{n}=1$ (resp. $y_{v,w}^\textsf{s}=1$) if and only if vehicle $v$ transmits the non-safety packet (resp. the safety packet) to vehicle $w$. 

The signal-to-interference-plus-noise ratio (SINR) of the $i$th packet ($i$ denotes either the non-safety packet or the safety packet, i.e., $i\in\textsf{n},\textsf{s}\}$) between vehicle $v\in\mathscr{V}$ and vehicle $w\in\mathscr{W}_v$ on the RB $(f,t)$ is given by :
\begin{align}
  \label{sinrraw}
  \textsf{sinr}_{v,w,f}^{i}(t)\coloneqq\dfrac{p_{v,f}^i(t)g_{v,w,f}(t)}{1+I_{v,w,f}(t)},
\end{align}
where $I_{v,w,f}(t)$ is the interference generated by other transmitting vehicles broadcasting on RB $(f,t)$, which is given by:
\begin{align}\label{ni}
I_{v,w,f}(t)=\sum\limits_{v'\in\mathscr{V}_{v,w,f}(t)}g_{v',w,f}(t)\bigl(p_{v',f}^\textsf{n}(t)+p_{v',f}^\textsf{s}(t)\bigr),
\end{align} 
and $\mathscr{V}_{v,w,f}(t)=\{v'\in\mathscr{V}: g_{v,w,f}(t) > g_{v',w,f}(t)\}$. In~\eqref{ni}, we used $p_{v',f}^\textsf{n}(t)+p_{v',f}^\textsf{s}(t)$ because we assumed that a transmitting vehicle sends its packets on orthogonal RBs.

Our main focus is on optimizing the packet reception ratio (PRR) in V2X communication networks. Technically, it is called Type 2 PRR and is defined in~\cite{3gpp.37.885} as follows: for a packet and a transmitting vehicle, the PRR is given by the percentage of vehicles with successful reception among the total number of receiving vehicles. For this purpose, we optimize the total number of vehicles with successful reception. Therefore, the objective function is given by :
\begin{align}\label{objfunc}
    \sum_{v\in\mathscr{V}}\sum_{w\in\mathscr{W}_v}\bigl(y_{v,w}^\textsf{n}+y_{v,w}^\textsf{s}\bigr).
\end{align}
We note that the objective function is an unweighted sum of the number of successfully delivered packets, which means that safety and non-safety packets are treated the same. Nevertheless, since safety packets are generally lighter in bit size, they should be delivered more often than non-safety packets as we show in the simulation results. Adding the weights $\lambda_\textsf{s}$ and $\lambda_\textsf{n}$ to the objective function is a straightforward approach to prioritize each type of packet, but further analysis of fairness is needed, which will be left for our future work.

The vehicular resource allocation (VRA) problem is formulated as an optimization program as follows. 
\begin{subequations}
\label{pb:1}
  \begin{align}%{2}
  \maximize &\quad \sum_{v\in\mathscr{V}}\sum_{w\in\mathscr{W}_v}\bigl(y_{v,w}^\textsf{n}+y_{v,w}^\textsf{s}\bigr),\label{pb:1a}\\
  \subjto
  & \quad x_{v,w,f}^\textsf{n}(t),x_{v,w,f}^\textsf{s}(t),y_{v,w}^\textsf{n},y_{v,w}^\textsf{s}\in\{0,1\},\forall v,w,f,t,\label{pb:1b}\\
  & p_{v,f}^i(t)\in[0,p_v^{\textsf{max}}],\forall v,f,t,\label{pb:1c}\\
  & \sum_{f\in\mathscr{F}}\sum_{t\in\mathscr{T}}\beta\tau\lg\bigl(1+\textsf{sinr}_{v,w,f}^i(t)\bigr)\geq\sigma_v^iy_{v,w}^i,\forall v,w,i,\label{pb:1d}\\
  & x_{v,w,f}^i(t)+x_{v,w',f'}^i(t)\leq1,\forall v,t,f\ne f',w,w',i,\label{pb:1e}\\
  & x_{v,w,f}^i(t)\leq y_{v,w}^i,\forall v,w,f,t,i,\label{pb:1f}\\
  & y_{v,w}^i\leq\sum_{f\in\mathscr{F}}\sum_{t\in\mathscr{T}}x_{v,w,f}^i(t),\forall v,w,i,\label{pb:1g}\\
  & x_{v,w,f}^\textsf{n}(t)+x_{v,w,f}^\textsf{s}(t)\leq1,\forall v,w,f,t,\label{pb:1h}\\
  & x_{v,w,f}^\textsf{s}(t)=0,\forall v,w,f,t\notin\{s_v..e_v\},\label{pb:1i}\\
  & p_{v,f}^i(t)\leq p_v^{\textsf{max}}x_{v,w,f}^i(t),\forall v,w,f,t,i,\label{pb:1j}
  \end{align}
\end{subequations}
% \clearpage
The objective function and the constraints are explained in the following list.
\begin{itemize}
\item The objective function in~\eqref{pb:1a} represents the number of successfully received packets. 
\item Constraints~\eqref{pb:1b} and~\eqref{pb:1c} represent the optimization variables. 
\item Constraints~\eqref{pb:1d} force a packet to be successfully received only if its achievable bits (where $\beta$ is the bandwidth and $\tau$ is the time-slot duration) on the allocated RBs are greater than the minimum required bits. 
\item For the sake of fairness, constraints~\eqref{pb:1e} require that transmitting vehicle $v$ at time-slot $t$ should not use more than one frequency-slot to send packet $i$. 
\item Constraints~\eqref{pb:1f} and~\eqref{pb:1g} relate the variables $x_{v,w,f}^i(t)$ and $y_{v,w}^i$.
\item Constraints~\eqref{pb:1h} state that transmitting vehicle $v$ should use orthogonal RBs to send both the safety and the non-safety packets.
\item Constraints~\eqref{pb:1i} guarantee that vehicle $v$ should transmit its safety packet only in the range $\{s_v..e_v\}$ of time-slots.
\item Constraints~\eqref{pb:1j} guarantee that the transmission power of vehicle $v$ should be zero if $v$ does not transmit any packet.
\end{itemize}

Problem~\eqref{pb:1} is nonlinear and nonconvex because of the constraints~\eqref{pb:1c}. We show in the next subsection that it is in addition NP-hard even for a special case that can be formulated as an integer linear program. Thus,~\eqref{pb:1} is very challenging to solve in an optimal way and it is even more difficult to solve it in a distributed manner without assuming the existence of a central coordination point.

\subsection{NP-hardness}

\begin{lemma}\label{lemma1}
    VRA is NP-hard.
\end{lemma}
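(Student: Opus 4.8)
The plan is to establish NP-hardness by exhibiting a special case of VRA that already encodes a known NP-hard problem, and then giving a polynomial-time reduction from that problem. Because the hardness only needs to survive in one special case, I would strip away exactly the features that make~\eqref{pb:1} continuous or nonconvex while preserving a genuine combinatorial core. Concretely, I would fix every transmission power to its maximum $p_v^{\textsf{max}}$, so that the achievable-rate terms in~\eqref{pb:1d} become constants and the program collapses to the integer linear program in the $x$ and $y$ variables; restrict the instance to a single transmitting vehicle, so that the interference term $I_{v,w,f}(t)$ vanishes and NOMA plays no role; and use a single packet type over the full time horizon, so that the deadline constraint~\eqref{pb:1i} is vacuous. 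What remains is pure \emph{coverage-plus-resource selection}: choosing, under the one-frequency-per-time-slot rule~\eqref{pb:1e}, which receiving vehicles can be marked as successfully served. I would reduce the \textsc{Maximum Coverage} problem (equivalently, the decision version of \textsc{Set Cover}), which is NP-hard, to this special case.

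For the reduction, given a \textsc{Maximum Coverage} instance with ground set $U$, subsets $S_1,\dots,S_F\subseteq U$, and budget $k$, I would build a VRA instance with one transmitting vehicle $v$, receiving vehicles $\mathscr{W}_v=U$ (one per element), $F$ frequency-slots (one per subset), and $T=k$ time-slots. I would set the channel gains $g_{v,w,f}(t)$ so that frequency $f$ delivers a high rate to receiver $w$ precisely when $w\in S_f$ and zero rate otherwise, and choose the packet requirement $\sigma_v$ so that a single qualifying resource block already meets~\eqref{pb:1d} while no number of non-qualifying blocks ever does. Under this encoding,~\eqref{pb:1e} forces the solver to pick at most one frequency per time-slot, hence at most $k$ subsets overall; the linking constraints~\eqref{pb:1f}--\eqref{pb:1g} together with~\eqref{pb:1d} make $y_{v,w}=1$ attainable exactly when $w$ lies in one of the selected subsets; and the objective~\eqref{pb:1a} then counts the size of the covered union. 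Thus an optimal VRA solution has value $C$ if and only if the \textsc{Maximum Coverage} optimum is $C$, and the construction is clearly polynomial.

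The remaining work is to nail down both directions of the equivalence and, in particular, the gadget arithmetic, which I expect to be the main obstacle. I would verify that (i) setting the member gain large enough makes the single-block rate exceed $\sigma_v$, while setting the non-member gain to zero guarantees that accumulating several non-qualifying blocks can never spuriously satisfy~\eqref{pb:1d}; (ii) the constructed instance respects \emph{all} constraints of~\eqref{pb:1}, including the power bound~\eqref{pb:1c}, the $x$--$y$ coupling, and integrality~\eqref{pb:1b}; and (iii) re-selecting the same frequency in two different time-slots is never beneficial, so the budget-$k$ coverage reading is exact and an optimal schedule can always be taken to use $k$ distinct frequencies. With these checks in place, the optimal values coincide, the special case is NP-hard, and therefore VRA is NP-hard.
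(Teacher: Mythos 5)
Your reduction is sound, but it takes a genuinely different route from the paper's. The paper also argues by restriction, yet strips VRA down in the opposite direction: \emph{many} transmitting vehicles, a \emph{single} receiving vehicle, a \emph{single} frequency-slot, the safety slice only, OMA, and maximum power; it then reduces \textsc{Maximum Independent Set} to the result --- vertices are transmitters, edges are time-slots, the channel gain $g_{v,w}(t)$ is the incidence indicator of edge $t$ with vertex $v$, and $\sigma_v^{\textsf{s}}$ equals the degree of $v$ --- so the hardness is located in the \emph{inter-vehicle scheduling} dimension (no two transmitters may share a time-slot under OMA). You instead keep one transmitter, many receivers and many frequencies, and reduce \textsc{Maximum Coverage}, locating the hardness in the \emph{coverage/RB-selection} dimension. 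Both are legitimate. Yours buys the arguably stronger message that VRA is NP-hard even for a single agent, with no interference and no NOMA at all (and, as a byproduct of known coverage hardness, inapproximability beyond $1-1/e$ in that special case); the paper's shows that the multi-transmitter conflict structure alone suffices, even with one receiver and one frequency. Note also that your proof inherits exactly the same looseness as the paper's: ``fixing'' the power at $p_v^{\textsf{max}}$ and dropping one slice are behavioral rather than instance restrictions, so both arguments need the remark that, in the constructed instances, these choices are without loss of optimality (in your case this holds because, absent interference, more power never hurts).

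One detail you should nail down, because it affects your construction but not the paper's: constraint~\eqref{pb:1j} is quantified over \emph{all} $w$, so read literally, positive power on an RB forces $x_{v,w,f}^i(t)=1$ for every receiver $w$, hence via~\eqref{pb:1f} $y_{v,w}^i=1$ for every $w$, hence via~\eqref{pb:1d} a rate requirement that your zero-gain receivers cannot meet. The paper's single-receiver gadget is immune to this quirk; yours is not, and under the literal reading the optimum of your instance collapses to ``$|U|$ if $k$ subsets cover $U$, else $0$.'' This does not kill your proof --- that dichotomy is precisely the \textsc{Set Cover} decision problem, which is still NP-hard --- but the clean equivalence ``optimal VRA value equals maximum coverage'' only holds under the constraint's stated intent (power is zero iff the vehicle transmits to no one), so you should either argue under that intended semantics or restate the reduction from \textsc{Set Cover}.
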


\begin{proof}
    We prove the lemma by restriction. We consider the following restricted version of VRA:
    \begin{itemize}
    \item[1.] there is only the safety slice;
    \item[2.] there is a single frequency-slot;
    \item[3.] there are $m$ transmitting vehicles and a single receiving vehicle $w$;
    \item[4.] the arrival and deadline of the safety packet of vehicle $v$ are $s_v=1$ and $e_v=T$, respectively;
    \item[5.] OMA technique is assumed; and
    \item[6.] all transmitting vehicles are allocated their maximum power.
    \end{itemize} 

	Under this restriction, VRA becomes equivalent to the following problem: maximize the number of safety packets successfully received by $w$ while allocating time-slots to the transmitting vehicles such that (i) the minimum required rate of vehicle $v$ is met, and (ii) the transmitting vehicles do not use the same time-slots.

    We prove NP-hardness of this restricted version of VRA by reducing the maximum independent set (MIS) problem~\cite{Garey:1979} to it in polynomial time. 

    Let an instance of MIS be given by a graph. Vertices represent transmitting vehicles while edges represent time-slots. An edge exists between two vertices if and only if one of the corresponding vehicles can be scheduled in the corresponding time-slot. We can construct the channel coefficients between transmitting vehicle $v$ and $w$ at time-slot $t$ as follows: if the edge corresponding to $t$ is incident to the vertex corresponding to vehicle $v$, then the channel coefficient is set to $1$, otherwise it is set to $0$. The transmission power and the noise power are set to $1$. The minimum number of bits required by vehicle $v$, $\sigma_v^{textsf{s}}$, is chosen equal to the degree of vertex $v$. 
    
    If there exists an independent set $\mathscr{IS}$ in the given graph of maximum size, then by scheduling vehicle $v\in\mathscr{IS}$ at time-slot $t$ (for each $t$ incident to $v$), we have a maximum number of scheduled vehicles. Since we have an independent set, it is true that each time-slot is used by at most one vehicle. Moreover, by construction of the channel coefficients, the rate reached by vehicle $v$ is equal to the degree of vertex $v$. On the other hand, if there is a solution to the restricted version of VRA of maximum size, then, to meet the minimum number of bits required, vehicle $v$ must be scheduled in all time-slots incident to it. The set of scheduled vehicles forms an independent set of maximum size since each time-slot is used by at most one vehicle. 

    In summary, MIS reduces to the restricted version of VRA in polynomial time and the latter is solved if and only if the former is solved. The proof of the lemma follows since MIS is a well known NP-hard problem~\cite{Garey:1979}.
\end{proof}

\section{Multi-Agent Deep Reinforcement Learning Based Resource Allocation}\label{sec:sol}

When vehicles operate in 5G-NR C-V2X assisted sidelink scenario, the resource allocation problem becomes very difficult to solve due to the lack of a central coordination central point acting as a network manager. Therefore, it is necessary for vehicles to solve the resource allocation problem by themselves. To this end, we use multi-agent DRL to develop an online distributed algorithm to solve VRA. The proposed algorithm is based on the well-known deep Q learning (DQL) approach~\cite{mnih-dqn-2015}. DQL operates in two phases: the learning (or the training) phase and the inference (or the testing) phase. In the training phase, each agent trains a deep neural network, often referred to as deep Q network (DQN). In the inference phase, each agent, based on its observations, takes actions based on its trained DQN.

DQL is a major improvement of the table-based method called Q-learning. Q-learning works by creating a table of state-action pairs and finding the best action given a certain state. It often uses an exploration policy called the $\epsilon$-greedy method in which an action is chosen at random with probability $\epsilon$ and is chosen to give the best reward so far otherwise. Q-learning can solve an interesting set of RL problems. However, when the state and action spaces become large (which is often the case in resource allocation problems in wireless networks), creating a Q-table and finding the best policy becomes a prohibitively complex task. In addition, many states will be very rarely visited. Moreover, the task of learning tabular Q becomes even more complex when the learning is multi-agent. 

DQL is a promising approach that can be used to solve the curse of dimensionality in RL~\cite{mnih-dqn-2015} by approximating the Q function instead of using a Q table. One way to solve the multi-agent problem in RL is to combine DQN with independent Q learning for each agent. In other words, each agent tries to learn its own policy based on its own observations and actions while treating all other agents as part of the environment. However, this will strongly influence the outcome of the training phase as it will create a non-stationary environment. Therefore, we will also discuss how to remove the non-stationarity problem by creating a specific state space.

Before describing the proposed DQL algorithm, we first model VRA as a multi-agent Markov decision process (MDP) defined by a state space, an action space, a reward function, and a probability transition function. Each transmitting vehicle, as an independent agent in a given state, performs an action, receives a reward, and transitions to the next state based on its interaction with the vehicle environment. This interaction with the unknown vehicle environment allows each agent to gain experiences and increase its accumulated rewards. The interaction of the agents with the environment is illustrated in Fig.~\ref{fig:mdp}. To maximize system performance, i.e., the objective function of VRA in~\eqref{pb:1}, the agents must act cooperatively. By specifically designing the reward function, we are able to construct a multi-agent RL framework in which agents cooperatively maximize the objective function of VRA.
\tikzset{%
  block/.style    = {draw, thick, rectangle, minimum height = 3em,
    minimum width = 3em},
}
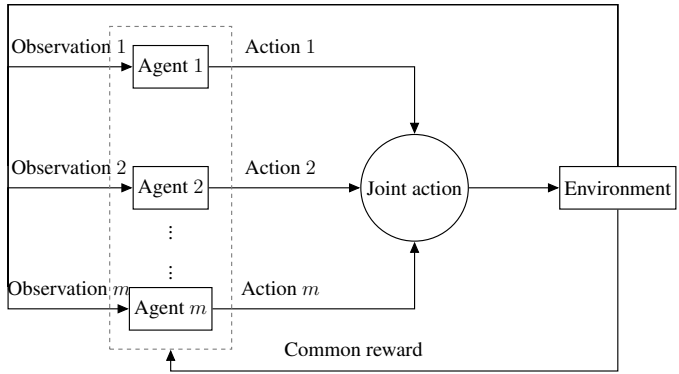
\begin{figure}[ht!]
  \centering
  \captionsetup{justification=centering,margin=2cm}
  \resizebox{.75\textwidth}{!}{%
\begin{tikzpicture}[auto, thick, node distance=2cm, >=triangle 45]
    \draw
        node at (15,-5) [block] (E) {\Large Environment}
        node at (4, -2) [block] (a1) {\Large Agent $1$}
        node at (4, -5) [block] (a2) {\Large Agent $2$}
        node at (4, -6) (dots1) {\Large $\vdots$}
        node at (4, -7) (dots2) {\Large $\vdots$}
        node at (4, -8) [block] (am) {\Large Agent $m$};

    \node[circle,draw,thick,minimum size=1cm] at (10,-5) (ja) {\Large Joint action};
    \coordinate[above = 4cm of E] (v1);
    \coordinate[left = 15cm of v1] (v2);
    \coordinate[below = 4cm of v2] (v3);
    \coordinate[below = 7cm of v2] (v4);
    
    \draw[->] (E) -- (v1) -- (v2) -- (v3) |- (a1);
    \draw[->] (E) -- (v1) -- (v2) -- (v4) |- (a2);
    \draw[->] (E) -- (v1) -- (v2) -- (v4) |- (am);
    \draw[->] (ja) -- (E);
    \draw[->] (a1) -| (ja);
    \draw[->] (a2) -- (ja);
    \draw[->] (am) -| (ja);

    \node[] at (6.7,-1.5) {\Large Action $1$};
    \node[] at (6.7,-4.5) {\Large Action $2$};
    \node[] at (6.7,-7.5) {\Large Action $m$};

    \node[] at (1.5,-1.5) {\Large Observation $1$};
    \node[] at (1.5,-4.5) {\Large Observation $2$};
    \node[] at (1.5,-7.5) {\Large Observation $m$};

    \node (rect) at (4,-5) [draw,dashed,color=gray,thick,minimum width=3cm,minimum height=8cm] {};

    \coordinate[below = 4cm of E] (v5);

    \draw[->] (E) -- (v5) -| (rect);

    \node[] at (8.5,-9) {\Large Common reward};
\end{tikzpicture}
}
\caption{The multi-agent interaction with the vehicle environment.}
\label{fig:mdp}
\end{figure}

Mathematically, at each time-slot $t$, given that the environment is in the state $\mathscr{S}(t)$, each $v$ transmitting vehicle successively (i) receives an observation $\mathscr{O}_v(t)$ from the environment, (ii) takes an action $a_v(t)$, and (iii) receives a reward $r(t+1)$. Then, the environment evolves to the next state $\mathscr{S}(t+1)$. The reward function $r(t+1)$ is independent of agent $v$, meaning that agents receive a common reward signal. This is designed to encourage cooperative behavior among  agents. The reward function $r(t+1)$ is computed as soon as all agents take their actions, forming a common action $\mathbf{a}(t)=(a_v(t),\mathbf{a}_{-v}(t))$, where $\mathbf{a}_{-v}(t)$ denotes the actions of all vehicles but $v$. In the following, we describe the key elements of the multi-agent MDP.

\subsection{The State Space}
The state $\mathscr{S}(t)$ of the vehicle environment at time-slot $t$ is not known directly by each transmitting vehicle (called agent from now on). Instead, given the state $\mathscr{S}(t)$, each agent $v$ receives an observation $\mathscr{O}_v(t)$. State $\mathscr{S}(t)$, which is hard to acquire by each agent due to its high mobility, includes accurate and global information about CSI of all agents and their transmission behaviors (e.g., the remaining data bits to be sent in future time-slots). However, each agent $v$ can gain knowledge only by observing the environment using $\mathscr{O}_v(t)$. In our model, $\mathscr{O}_v(t)$ includes local CSIs and transmission behavior of $v$ alone. Specifically, $\mathscr{O}_v(t)$ is given by the following tuple:
\begin{align}\label{obs}
    \mathscr{O}_v(t)=(\mathbf{g}_{v}(t), \mathbf{d}'_{v}(t), \mathbf{z}(t), \mathbf{l}_v(t), s_v, e_v, k, \epsilon),
\end{align}
where $\mathbf{g}_{v}(t)\coloneqq(d_{v,w}(t): w\in\mathscr{W}_v)$ represents the large-scale and small-scale fading between agent $v$ and all other receiving vehicles $\mathscr{W}_v$. The value $\mathbf{g}_v(t)$ can be accurately estimated by the receiving vehicles and returned to each agent without significant delay~\cite{8792117}. Similarly, $\mathbf{d}'_{v}(t)$ is the distance between agent $v$ and all other agents. The distances $\mathbf{d}'_v(t)$ help agent $v$ observe the location of other agents (who are likely to interfere with it) and thus contribute to the required cooperative behavior. The distance $\mathbf{d}'_v(t)$ is location-based channel information that can be accurately estimated by agent $v'$ and can be fed back to $v$ with a delay-free communication~\cite{8792117}. Therefore, it can be assumed that $\mathbf{g}_v(t)$ and $\mathbf{d}'_v(t)$ are instantly available to agent $v$. In addition, $\mathscr{O}_v(t)$ includes a decision variable $\mathbf{z}(t)$ that indicates whether or not the agents were transmitting in previous time-slots and if so, which packet did they transmit. The fourth observation that agent $v$ acquires is $\mathbf{l}_v(t)$ which indicates the remaining bits of the two packets that agent $v$ should send (e.g., initially, $l_v^{i}(t)=\sigma_v^{i}$). The observation also includes the global start time $s_v$ and end time $e_v$ to primarily capture the safety packet's deadline. The last two parameters that $\mathscr{O}_v(t)$ includes are the learning episode index $k$ and the exploration rate of exploration policy $\epsilon$-greedy. These last two parameters are used mainly to remove the non-stationarity problem that arises in multi-agent DQL~\cite{pmlr-v70-foerster17b}. In fact, DQL relies on an experience replay memory that is used to better train the DQN by creating a data set and applying batch sampling from time to time. As discussed earlier, if DQL is applied to each agent independently, non-stationarity occurs and batch sampling no longer reflects the current dynamics of the environment. In~\cite{pmlr-v70-foerster17b}, the authors solved the non-stationarity problem by conditioning each agent's Q-function on a fingerprint that tracks the policies of the other agents. They showed that the agent's policy is strongly correlated with the training episode index $k$ and the exploration rate $\epsilon$.

\subsection{The Action Space}
The VRA problem is solved online, i.e. on a time-slot basis. At each time-slot, the agent has to make four decisions which are given as follows: (i) coverage broadcast selection, (ii) packet selection, (iii) RB allocation, and (iv) power allocation. The first decision is the selection of the set of receiving vehicles to broadcast to. The second decision is the selection of the packet to broadcast (safety or non-safety). The third decision is the selection of the frequency-slot and time-slot to be used for transmission. Finally, the fourth decision is the allocation of transmission power for transmitting vehicles. For (i), we define a set of coverages $\mathscr{C}\coloneqq\{c_1,c_2,\ldots,c_\delta\}\cup\{0\}$ from which the coverage of each transmitting vehicle could be selected. Thus, if an agent chooses $c_i\mathscr{C}$ for a certain $i\in\{1, 2,\ldots,\delta\}$, then it will broadcast to all the receiving vehicles that are present in the circle of radius $c_i$ (note that when $c_i=0$, the transmitting vehicle will not broadcast any packet). For (ii), we define the set $\mathscr{B}\coloneqq\{\emptyset,\{1\},\{2\},\{1,2\}\}$ that indicates which packet of which slice the agent will transmit. The agent has four possible choices: it does not transmit, it transmits packet 1 (from slice 1), it transmits packet 2 (from slice 2), or it transmits both packets (from slices 1 and 2). For (iii), the RB allocation consists in choosing the frequency-slot $f$ to use in the current time-slot $t$. The last decision is (iv) which represents the power allocation. Normally, power should be allocated from a continuous interval $[0,p_v^{\textsf{max}}]$ which makes DQL implementation more complex. Nonetheless, discrete power allocation is a realistic assumption in many real systems~\cite{8052127,7010886,4215595}. Thus, to keep things simple and realistic and following the assumptions of previous work~\cite{8792382}, we define a set $\mathscr{P}=\{p_1,p_2,\ldots,p_\ell\}$ of discrete power levels that the agent can use to transmit its packets on the allocated RB.

In summary, the action space of agent $v$ at time-slot $t$ is given by the set $\mathscr{A}_v(t)\coloneqq\mathscr{C}\times\mathscr{B}\times\mathscr{F}\times\mathscr{P}$. 

\subsection{The Reward Function}
As discussed in 5G-NR C-V2X, an important optimization objective in vehicular broadcast communication is the PRR, which is directly related to the number of successfully received packets. In VRA, we correlate the objective function in~\eqref{pb:1} with the reward function. The flexibility in the design of reward functions is what makes RL a particularly attractive approach to solving NP-hard problems like VRA.

The reward of agent $v$ at time-slot $t$ depends on whether the agent successfully transmits the packet or not. As long as the packet is not successfully transmitted, the reward is set to the normalized achievable rate between agent $v$ and each of its selected receiving vehicles. (Normalization is used to make the reward less than one.) As soon as the agent successfully transmits its packet, the reward is set to $1$ (the highest possible reward for agent $v$). Setting the reward to a number less than $1$ when the agent has failed to transmit its packet helps the agent to acquire useful informations for its future decisions. Furthermore, by setting the maximum possible reward once the packet has been successfully transmitted, the agent learns the best possible outcome. Agent's $v$ individual reward at time-slot $t$ when transmitting packet $i$ to $w$ using frequency-slot $f$ is given by the following formula:
\begin{align}\label{rew}
    \begin{cases}
    \frac{\beta\tau}{\Gamma}\lg(1+\textsf{sinr}_{v,w,f}^i(t)), & \text{if $v$ does not finish its transmission}, \\
    1, & \text{otherwise},
    \end{cases}
\end{align}
where $\Gamma$ is a normalization constant to make the reward less than $1$.

The reward function $r(t)$ is thus the sum of all individual rewards of each agent; creating a common reward for all agents. This will encourage agents to play cooperatively while learning, in order to maximize their rewards.

The goal of the DQL is to maximize the cumulative (discounted) reward in the long run, given an initial state of the environment $\mathscr{S}(0)$. This cumulative reward is given mathematically by:
\begin{align}
    \sum_{\iota=0}^{\infty}\gamma^kr(t+\iota+1),\quad0\leq\gamma\leq1,
\end{align}
where $\gamma$ is called the discount factor.

\subsection{The Transition Probability Function}
The evolution of the environment from one state to another is often modeled by a probability distribution function. It is given mathematically by the probability of moving to  state $\mathscr{S}(t+1)$ and obtaining reward $r(t+1)$ given that the environment was in the state $\mathscr{S}(t)$ and that the action taken was $a(t)$, i.e. $\Pr[\mathscr{S}(t+1),r(t+1)|\mathscr{S}(t),\mathbf{a}(t)]$. This probability function depends on the highly dynamic vehicular environment (depends on channel coefficients and vehicle motion) and cannot be computed explicitly due to the complex nature of the vehicular environment.

\subsection{The Training Phase of DQL}
As discussed previously, DQL is composed of two phases: the learning phase and the inference phase. The learning phase lasts a number of episodes where each episode spans a time horizon of $T$ time-slots. DQL uses deep neural networks to approximate the Q function. We leverage DQL with prioritized replay memory~\cite{2015arXiv151105952S} and a dueling~\cite{pmlr-v48-wangf16}. In general, experience replay memory helps agents remember and use past experiences. Standard replay memory is used to sample experience transitions uniformly, without paying attention to the importance of the sampled experiences. Note, however, that prioritized experience replay memory is proposed to pay more attention to important experiences. This indeed allows agents to learn better~\cite{2015arXiv151105952S}. On the other hand, dueling~\cite{pmlr-v48-wangf16} is proposed as a new neural network architecture that represents two estimators for the Q-function.

The proposed DQL algorithm uses a DQN with a weight vector $\mathbf{w}_v$ to represent the Q-function of each agent $v$. In other words, we create $m$ DQNs; one for each agent. The input of DQN $v$ is given by the observation $\mathscr{O}_v(t)$ obtained by observing the state of the vehicle environment $\mathscr{S}(t)$. The output of the DQN $v$ is the value of the Q function which is given by the appropriate action taken by agent $v$ among the set of possible actions $\mathscr{A}_v(t)$. All DQNs are trained simultaneously. Note that we assume some level of synchronization between agents, which can be achieved by a central controller (or a roadside unit) that collects agents' actions and forms the joint action to calculate the common reward function. In fact, once each agent chooses its action, it sends it to the roadside unit (RSU). Next, the RSU forms a joint action composed of agents’ individual actions. It can then check if the joint action is feasible or not according to the problem constraints and can calculate the common reward function.

The learning phase lasts $H$ episodes. In each episode, agents explore the action space using the $\epsilon$-greedy policy. Each episode $k$ covers a time horizon of $T$ time-slots. At the beginning of the first time-slot, the initial state of the vehicular environment (initial distances of vehicles, etc.) is revealed to all agents. For each time-slot $t\geq1$, each agent chooses an action, that is a tuple $a_v(t)\coloneqq(c_v,b_v,f_v,p_v)\in\mathscr{A}_v(t)$, according to $\epsilon$-greedy, where $c_v$ is the coverage area, $b_v$ is the packet(s) to be sent, $f_v$ is the frequency-slot and $p_v$ is the transmission power. Once all agents have chosen their actions, a joint action $\mathbf{a}(t)$ is formed and the reward function $r(t+1)$ is calculated at the next time-slot $t+1$. Each agent moves to the next state due to the evolution of the channel coefficients and vehicle mobility. The resulting tuple $\text{Exp}_v\coloneqq(\mathscr{O}_v(t),a_v(t),r(t+1),\mathscr{O}_v(t+1))$ is called the experience of agent $v$ and is stored in its prioritized replay memory with some associated priority. After a few episodes, a mini-batch $\mathscr{M}_v$ is sampled according to the priorities from the prioritized replay memory. This mini-batch is used to update the DQN weight parameter using a variant of the stochastic gradient descent algorithm to minimize the loss function. The loss function is given by the mean square error as follows: 
\begin{align}\label{loss}
    \sum_{\text{Exp}_v\in\mathscr{M}_v}\Bigl[r(t+1)+\gamma\max_{a_v}\{Q(\mathscr{O}_v(t+1),a_v;\mathbf{w}^{-}_v)\}-Q(\mathscr{O}_v(t),a_v(t);\mathbf{w}_v)\Bigr]^2,
\end{align}
where each DQN is represented mathematically by the Q function $Q(\mathscr{O}_v(t),a_v(t);\mathbf{w}_v)$ (that DQL tries to approximate) and $\mathbf{w}^{-}_v$ is the weight parameter of a duplicate copy of the original DQN (the target DQN) that is created in order to update the original DQN from time to time. The creation of a target DQN is suggested by the quasi-static target network method~\cite{mnih-dqn-2015} to set the targets of the Q values. The pseudocode for the learning phase of the DQL algorithm is given in Algorithm~\ref{alg:dql}.
\begin{algorithm}[ht!]
  \caption{The Training Phase of DQL}
  \label{alg:dql}
  \begin{algorithmic}[1]
    \Require{Agents and environment.}
    \Ensure{Trained DQNs.}
    \State Start simulator: generate vehicles and network parameters.
    \State Initialize the DQN of each agent $v$.
    \For{each episode $k\gets1$ \textbf{to} $H$}
        \State Reset and build the environment.\Comment{Generate vehicles, their speeds and directions, generate network parameters and QoS requirements, etc.}
        \State Anneal the exploration rate $\epsilon$ for each agent.\Comment{Use the decayed $\epsilon$-greedy approach in~\eqref{decayedeps}.}
        \For{each time-slot $t\gets1$ \textbf{to} $T$}
            \State $t_k\gets(k-1)T+t$
            \For{each agent $v$}
                \State Observe $\mathscr{O}_v(t)$.\Comment{The observation includes all vehicular network parameters according to~\eqref{obs}.}
                \State Choose an action $a_v(t)$ using $\epsilon$-greedy.
            \EndFor 
            \State Obtain the joint action $\mathbf{a}(t)$.\Comment{The action of all agents.}
            \State Find the common reward $r(t+1)$.\Comment{The reward is designed as in~\eqref{rew}.}
            \For{each agent $v$}
                \State Obtain the next observation $\mathscr{O}_v(t+1)$.
                \State Prioritize the experience $\text{Exp}_v$.\Comment{For any agent, its $t$th experience is assigned a priority $\pi_t$ as given in~\eqref{priority1}.}
                \State Store $\text{Exp}_v$ in $\mathscr{M}_v$.
                \If{$t_k \text{ mod } T_{\text{train}}=0$}
                    \State Sample a mini-batch from $\mathscr{M}_v$.\Comment{For any agent, its $t$th experience is sampled according to its probability $\Pi_t$ given in~\eqref{priority1}. We sample according to these probabilities $N_{\text{samples}}$.}
                    \State Do a mini-batch training.\Comment{After constructing the dataset of samples, each agent starts the training by applying the stochastic gradient descent algorithm to minimize its loss function according to~\eqref{loss}.}
                    \State Update the priorities.\Comment{The priority of every experience $t$ is updated according to the temporal difference $\Delta_t$ as in~\eqref{td}.}
                \EndIf
                \If{$t_k \text{ mod } T_{\text{target}}=0$}
                    \State Update the target DQN of $v$.
                \EndIf
            \EndFor
        \EndFor
    \EndFor
  \end{algorithmic}
\end{algorithm}

The training phase of DQL requires as input the vehicular environment which includes vehicles, channel coefficients, packet requirements, available RBs and any other relevant network parameters. It returns as output trained DQN of each agent. First, DQL generates all the vehicular network parameters, and then it initializes the weights of each DQN. Then, it iterates the episodes. For each episode $k$, the vehicular environment is constructed by (i) updating the network parameters, e.g., the remaining bits of each agent are updated based on the previous episodes, and (ii) moving the vehicles according to the mobility model. Then, the exploration rate $\epsilon$ is annealed based on the episode index. The annealing of the exploration rate over time is a technique used in RL to solve the dilemma between exploration and exploitation, i.e., over time we decrease $\epsilon$ to increase the probability of exploitation when the agent starts to learn something useful. To perform annealing, we use the decayed $\epsilon$-greedy algorithm which works as follows. Let $\epsilon_{\text{start}}$ be the initial value of $\epsilon$ (before the first episode) and let $\epsilon_{\text{end}}$ be the final value of $\epsilon$. Let $H_{k}$ be the number of learning episodes after which the annealing stops. Thus, the value of $\epsilon$ is updated (annealed) according to:
\begin{align}\label{decayedeps}
	\begin{cases}
		\epsilon\gets(\epsilon_{\text{start}}-\epsilon_{\text{end}})(1-k/H_k)+\epsilon_{\text{end}},&\text{ if }  k<H_k\\
		\epsilon\gets\epsilon_{\text{end}},& \text{ otherwise.}
	\end{cases}
\end{align}
After $H_k$ episodes, the value of $\epsilon$ is no longer decreased and is set to $\epsilon_{\text{end}}$. 

Once all agents have chosen their actions based on the annealed $\epsilon$ as in~\eqref{decayedeps}, the joint reward is computed by each receiving vehicle through the physical sidelink feedback channel (PSFCH)~\cite{3gpp.38.885}. Specifically, a receiving vehicle calculates the received \textsf{sinr} (and thus the achievable data rate) and finds out how many bits a particular agent is transmitting. Then it broadcasts this information. In this way, all agents can acquire the joint action $\mathbf{a}(t)$ at any time $t$. To guarantee some level of synchronization, the joint action can be obtained by an RSU that collects all individual actions. Once the reward signal is received by all agents, learning begins. Then, the environment moves to the next state according to vehicle mobility and channel variations. Each agent then adds its experience $\text{Exp}_v(t)$ to its prioritized replay memory. Initially, agents assign random priorities to their experiences, but the priorities change as agents begin to learn and update the parameters of their DQNs. Specifically, each agent's $t$ experience is assigned a priority $\pi_t$ (the agent index is omitted for simplicity). Then, experience $t$ is sampled from each agent's replay buffer with a probability $\Pi_t$ given by~\cite{2015arXiv151105952S}:
\begin{align}\label{priority1}
	\Pi_t\coloneqq\dfrac{\pi_t^\alpha}{\sum_{t'}\pi_{t'}^\alpha}.
\end{align}
We use the proportional prioritization approach where $\pi_t=|\Delta_t|+\mu$, with $\Delta_t$ represents the time difference error of experience $t$ and $\mu$ is a small positive constant used to avoid the limiting case of transitions that are not revisited once their error is zero. Finally, importance sampling is used to eliminate any possible bias introduced by prioritization. This is achieved by introducing a weight called the importance sampling weight of experience $t$ given by $w_t\coloneqq(E\cdot\Pi_t)^{-\theta}$ with $E$ being the number of experiences in the replay buffer.

Once every $T_{\text{train}}$, each agent samples a mini-batch of size $N_{\text{samples}}$ from its prioritized replay memory. This prioritized replay memory forms a dataset that each agent uses to perform learning. Indeed, each agent uses a well-known variant of stochastic gradient descent to minimize the mean square error (or loss) which is given by~\eqref{loss}. In~\eqref{loss}, the term $r(t+1)+\gamma\max_{a_v}\{Q(\mathscr{O}_v(t+1),a_v;\mathbf{w}^{-}_v)\}$ denotes the target value that each DQN $v$ tries to reach or adjust. Then, each agent updates the priorities of the sampled experiences in proportion to the value of the loss. In other words, the priority of experience $t$ is updated as $\pi_t=|\Delta_t|+\mu$ where $\Delta_t$ is the time difference error and is defined as follows:
\begin{align}\label{td}
	\Delta_t=r(t)+\gamma\max_{a}\{Q(\mathscr{O}(t),a;\mathbf{w}^{-})\}-Q(\mathscr{O}(t-1),a(t-1);\mathbf{w}),
\end{align}
where agent's index is omitted. Finally, once every $T_{\text{target}}$ each trained DQN is copied into the target DQN.

\subsection{The Inference Phase of DQL}
The inference phase of DQL is as follows. First, the trained DQNs (their weight parameters) are loaded. Similarly, the annealed $\epsilon$ is loaded from the last training episode (the episode index is also revealed). Then, for each episode, which now represents a random realization of the channel, the environment is reset and built---initializing the network parameters and transmission behaviors of each agent. Then, for each time-slot, each agent $v$, after observing the environment, chooses the best action which is given by the maximum value of its Q function approximated by its DQN. Once all agents have chosen their actions, a common action is formed by the feedback from the receiving vehicles to the transmitting vehicles. Then the common reward is obtained and the next episode begins.

The inference phase of DQL is an online distributed algorithm that is run in each time-slot to find the best possible action to select without knowing the future observation. The learning phase is the most computationally intensive task in DQL. It is executed for a large number of episodes and can be performed offline with different channel conditions and network topologies. The pseudo-code for the inference phase of DQL is given in Algorithm~\ref{alg:dql2}.
\begin{algorithm}[ht!]
  \caption{The Inference Phase of DQL}
  \label{alg:dql2}
  \begin{algorithmic}[1]
    \Require{The trained DQNs.}
    \Ensure{Resource allocation solution of VRA.}
    \State Load the DQNs---one DQN per agent.\Comment{After training and saving DQNs' weights, they are loaded to use and apply to learned policy.}
    \For{each episode $k\gets1$ \textbf{to} $H$}
        \State Reset and build the environment.\Comment{Similar to the training phase, we generate vehicles, their speeds and directions, and we generate network parameters and QoS requirements, etc.}
        \State Get the annealed $\epsilon$ for each agent.\Comment{Obtain the saved decayed $\epsilon$ value.}
        \For{each time-slot $t\gets1$ \textbf{to} $T$}
            \For{each agent $v$}
                \State Observe $\mathscr{O}_v(t)$.\Comment{The observation includes all vehicular network parameters according to~\eqref{obs}.}
                \State Choose $a_v(t)$ that maximizes the Q function.\Comment{Choose the action $a_v=\argmax_a{Q(\mathscr{O}_v,a;\mathbf{w}_v)}$.}
            \EndFor 
            \State Obtain the joint action $\mathbf{a}(t)$.
            \State Find a solution to VRA.\Comment{Based on the chosen action, calculate the objective function in~\eqref{pb:1a}.}
        \EndFor
    \EndFor
  \end{algorithmic}
\end{algorithm}

Note that the complexity of DQL is dominated by training. This is due to the computationally intensive tasks that must be performed offline to learn the optimal policy. Nevertheless, once learning is done offline, DQL can be used online to act suboptimally according to the learned policy. The complexity of DQL depends on many parameters, namely the learning time of DQNs, the computation time to obtain the reward of each agent, the buffer sampling time, etc. To quantify the worst-case time complexity, we proceed as follows. The most computationally expensive instructions in DQL are written in lines 13, 19 and 20 of Algorithm 1. Let $T_{13}$, $T_{19}$, and $T_{20}$ be the worst-case computation time of these lines, respectively. The worst-case time $T_{13}$ corresponds to the worst-case time to compute the reward, which mainly depends on the SINR computation. Thus, at each time-slot $t$, $T_{13}=\mathcal{O}(m^2nF)$ according to the SINR equation in (1). The worst-case time $T_{19}$ is related to the sampling of a set of experiences in the mini-batch. It is mainly proportional to the size of the sampled mini-batch. Thus, $T_{19}=\mathcal{O}(N_{\text{samples}})$, where $N_{\text{samples}}$ represents the number of sampled experiences. Finally, the time complexity of line 20 in the worst case, $T_{20}$, depends mainly on the training performed. We have a mini-batch of size $N_{\text{samples}}$ that is trained using the ADAM optimizer. Thus, $T_{20}$ depends mainly on the ADAM optimizer, the training rate used, and the number of samples $N_{\text{samples}}$. Nevertheless, the value of $T_{20}$ cannot be given explicitly. Overall, DQL has a complexity of $\mathcal{O}(HT(T_{13}+mN_{\text{samples}}T_{20}))=\mathcal{O}(HT(m^2nF+mN_{\text{samples}}T_{20}))$, where $H$ and $T$ are respectively the number of episodes and the number of time-slots.

\section{Simulation Results}\label{sec:sim}
In this section, we validate the proposed resource allocation method. We present simulation results to illustrate the performance of the proposed DQL algorithm. The simulation setup is based on the road configuration for the highway scenario detailed in 3GPP TR 37.885~\cite{3gpp.37.885} and also used, to name a few, in~\cite{7913583}. We consider a multilane highway with a total length of $2$ km where each lane is $4$ m wide. There are a total of six lanes---three for the forward direction (vehicles move from right to left) and three for the reverse direction (vehicles move from left to right). Transmitting and receiving vehicles are generated in the vehicular environment by a spatial Poisson process. The vehicle speed determines the vehicle density and we consider an average inter-vehicle distance (in the same lane) of $2.5\times V$~\cite{7974737} where $V$ is the absolute vehicle speed. The speed of a vehicle depends on the lane it is in: the $i$th forward lane (top to bottom with $i\in\{1,2,3\}$) is characterized by the speed of $60+2(i-1)\times10$ km/h, while the $i$th backward lane (top to bottom with $i\in\{1,2,3\}$) is characterized by the speed of $100-2(i-1)\times10$ km/h. The number of transmitting vehicles $m$ and receiving vehicles $n$ is chosen randomly from the generated vehicles. The important simulation parameters are given in table~\ref{my-label}. In Fig.~\ref{fig:setup1}, we take a snapshot of the initial vehicle topology at the beginning of the first time-slot where the vehicles are randomly chosen according to the spatial Poisson process.\if and Fig.~\ref{fig:setup2}.\fi
\begin{table}[htp!]
\centering
 \caption{Vehicular network parameters}
\label{my-label}
\resizebox{0.75\columnwidth}{!}{%
\begin{tabular}{||l||l||}
\hline
Parameter                         & Value \\ [0.5ex]
\hline\hline       
Carrier frequency                 & $2$ GHz \\ \hline
Bandwidth per RB                  & $1$ MHz \\ \hline
Vehicle antenna height            & $1.5$ m \\ \hline
Vehicle antenna gain              & $3$ dBi \\ \hline
Vehicle receiver noise figure     & $9$ dB \\ \hline
% \addlinespace
Shadowing distribution            & Log-normal \\\hline
Fast fading                       & Rayleigh fading \\\hline
Pathloss model                    & LOS in WINNER + B1~\cite{winner} \\\hline
Shadowing standard deviation      & $3$ dB \\ \hline
Road configuration                & Highway road configuration~\cite{3gpp.37.885} \\ \hline
Vehicle drop model                & Spatial Poisson process \\\hline
Noise power $N_0$                 & $-114$ dBm \\ [1ex]
\hline
\end{tabular}
}
\end{table}

\begin{figure}[hb!]
        \centering
        \includegraphics[width=0.75\textwidth]{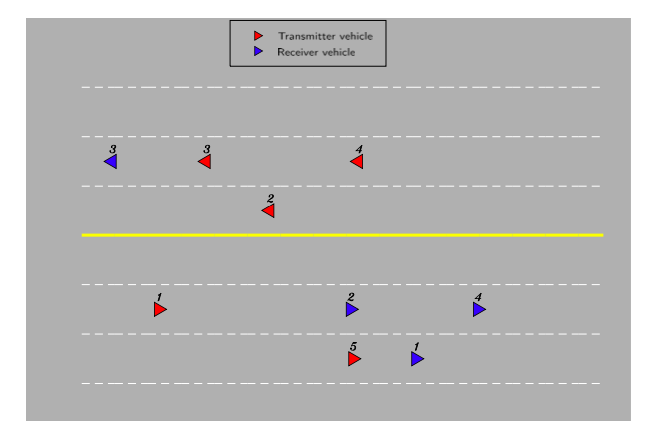} % first figure itself
        \caption{A snapshot of the vehicle topology at the beginning of the first time-slot with five transmitter vehicles and four receiver vehicles.}
        \label{fig:setup1}
\end{figure}

Unless otherwise specified, the packet requirements of the non-safety slice are generated uniformly at random in $\{0.1..1\}$ Mbit. The packet requirements of the safety slice are equal to $1200$ bytes. Each agent can choose a coverage (in m) from the set $\mathscr{C}=\{100,200,400,800,1000,1200,1400\}\cup\{0\}$. The power levels (in dBm) are given by the set $\mathscr{P}=\{5,10,15,20,23,27,30\}\cup\{-100\}$ where $-100$ dBm is used to indicate that the corresponding agent will not transmit its packets (similar to the zero coverage).

We train three different vehicular networks for a total time horizon of $100$ ms. The first one is called (2,2,1,5)-network and consists of $m=2$ agents (transmitting vehicles), $n=2$ receiving vehicles, $F=1$ frequency-slots and $T=5$ time-slots, each with a duration of $20$ ms. The second is called (5,4,2,10)-network and consists of $m=5$ agents, $n=4$ receiving vehicles, $F=2$ frequency-slots and $T=10$ time-slots, each with a duration of $10$ ms. The third is called (6,4,4,20)-network and consists of $m=6$ agents, $n=4$ receiving vehicles, $F=4$ frequency-slots and $T=20$ time-slots, each with a duration of $5$ ms. We use the first small network for optimal comparison purposes. We implement the mixed integer nonlinear programming (MINLP) formulation given in~\eqref{pb:1} in the Julia programming language~\cite{bezanson2017julia} using Juniper~\cite{juniper} package. Juniper is a MINLP solver written in Julia that solves MINLPs using NLP solvers and then branch-and-bound (BnB) based solvers (MIP solvers). In our implementation, we used Ipopt as the NLP solver and CPLEX as the MIP solver. DQNs are created and trained in Julia using Flux.jl~\cite{innes:2018} machine learning library. Each DQN consists of an input layer and an output layer and three fully connected hidden layers containing $500$, $350$, and $260$ neurons respectively. The rectified linear unit activation function (ReLU) given by $f(x)=\max\{0,x\}$ is used in each layer. Each DQN is trained with the ADAM optimizer~\cite{adam} with a learning rate of $10^{-5}$. The training lasts $H=3000$ episodes with an exploration rate starting at $\epsilon_{\text{start}}=1$ and annealed to $\epsilon_{\text{end}}=0.02$ for the $80\%$ of the episodes (i.e., from episode $k>2400$, the exploration rate is fixed at $0.02$). The target update frequency and training frequency ($T_{\text{target}}$ and $T_{\text{train}}$) are set according to the trained network, so that every $10$ episodes, DQL performs a mini-batch training and every $100$ episodes, it performs a target DQN update. The size of the mini-batch is chosen to be large and equal to $2000$~\cite{l.2018dont} and the experiment priority update parameters are set to $\alpha=0.6$, $\theta=0.4$, and $\mu= 0.001$~\cite{2015arXiv151105952S}. The channel coefficients (including slow and fast fading) change over time at each stage of the learning phase. The requirements and deadlines of the safety packets as well as the requirements of the non-safety packets are fixed in the training phase and change in the inference phase to validate the robustness of our proposed approach.

As discussed earlier, to the best of our knowledge, there is no current research work that solves VRA while considering network slicing selection, coverage selection, RB and power allocation. Due to the lack of comparisons, we adopt the following benchmarks to compare our proposed algorithm. We implement three benchmarks: two are based on NOMA technique and one is based on OMA technique. The partial idea of all benchmarks comes from~\cite{7974737,8632657} which is based on transforming the RB allocation problem in VRA into a matching problem. Then, a matching (or rotation) algorithm is adopted. All benchmarks are centralized and use randomization and offline decisions contrary to the distributed and online nature of DQL. They are called OMA-MP, NOMA-MP, and NOMA-RP. In the case of OMA-MP, each RB is used by no more than one vehicle to meet OMA constraints and the vehicles transmit with their maximum transmission power. In the NOMA-MP and NOMA-RP, each RB can be used by any vehicle and the vehicles transmit with their maximum transmission power and with random transmission power, respectively. Coverage and slot selection are performed randomly at the beginning of the time horizon. The allocation of RBs to vehicles is done in a similar way in all benchmarks~\cite{7974737,8632657}. First, an initial RB allocation is performed that finds the RB that yields the highest sum of channel power gain between a transmitting vehicle $v$ and its corresponding target receiving vehicle in $\mathscr{W}_v$. The sum criterion on receiver vehicles is adapted to our V2V broadcasting case and is used to maximize the number of receiving vehicles that have high channel power gains. Once the initial allocation is obtained, a swap match is performed to improve the number of successfully received packets. If no swap improves the matching, the algorithm terminates, otherwise the algorithm continues swapping operations (see~\cite{8632657} for more details).

\begin{figure}[h!]
  \centering
  \includegraphics[width=0.8\textwidth]{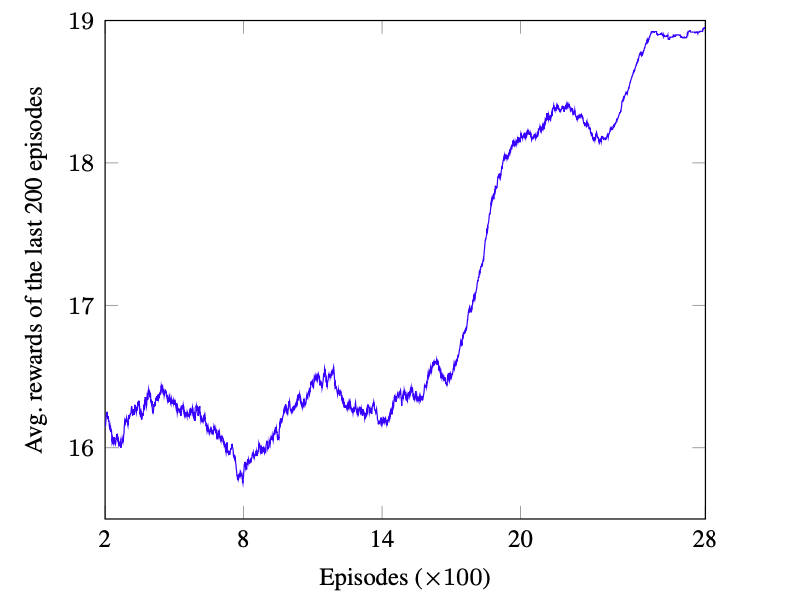}
  \caption{Training rewards.}
  \label{fig:1a}
\end{figure}
\begin{figure}[h!]
  \centering
  \includegraphics[width=0.8\textwidth]{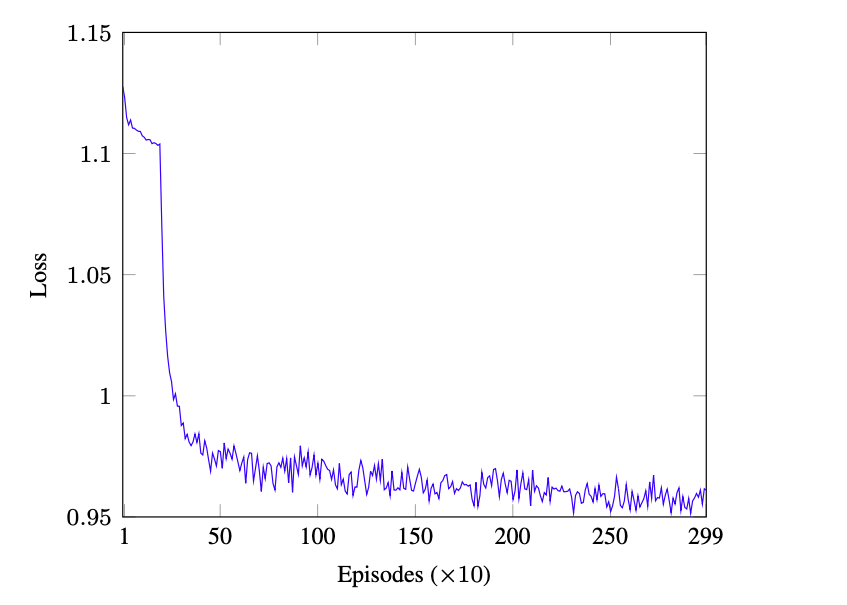}
  \caption{Training loss}
  \label{fig:1aa}
\end{figure}
Fig.~\ref{fig:1a} shows the convergence of DQL as a function of learning episodes. It shows the cumulative average rewards per episode where the average is taken over the last $200$ episodes. It is clear that the average rewards improve as the training episodes increase. This shows the effectiveness of the proposed algorithm. We can see that DQL converges gradually, starting at episode number $2700$. Note that the convergence of the algorithm is not smooth and contains some fluctuations which are mainly due to the high mobility of the vehicular environment. Based on Fig.~\ref{fig:1a}, we concluded that the DQNs should be trained offline for $3000$ episodes, as discussed earlier, to provide some convergence guarantees.

In Fig.~\ref{fig:1aa}, we show the convergence of the loss function used to train each DQN network as shown in~\eqref{loss}. It is clear that the loss decreases to a small value, indicating that the training improves with episode iteration. This makes the proposed multi-agent DQL algorithm important from a practical point of view as it provides some convergence guarantees.

\begin{figure}[h!]
  \centering
  \captionsetup{justification=centering,margin=2cm}
  \includegraphics[width=0.8\textwidth]{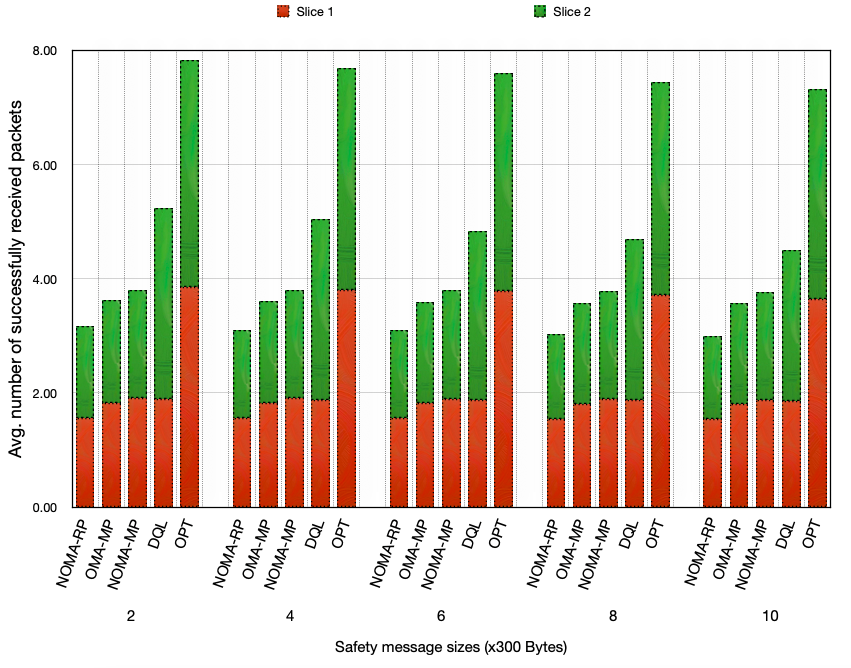}
  \caption{Impact of safety message sizes for the (2,2,1,5)-network.}
  \label{fig:smallnet}
\end{figure}
Fig.~\ref{fig:smallnet} shows the performance of DQL against the optimal algorithm OPT. OPT is obtained with the Julia-based Juniper package using the Ipopt NLP and CPLEX MIP solvers. It is clear that OPT outperforms all the algorithms since it is able to obtain the global optimal solution. However, due to the high complexity of OPT, its implementation in real vehicular network scenarios would be very impractical and that is why Fig.~\ref{fig:smallnet} is only obtained for the small network instance of (2,2,1,5)-network. Fortunately, DQL achieves the best performance compared to other heuristics while being online and distributed. We can also see that DQL is more than $60\%$ close to OPT, which illustrates its outstanding performance.

\begin{figure}[h!]
  \centering
  \captionsetup{justification=centering,margin=2cm}
  \includegraphics[width=0.8\textwidth]{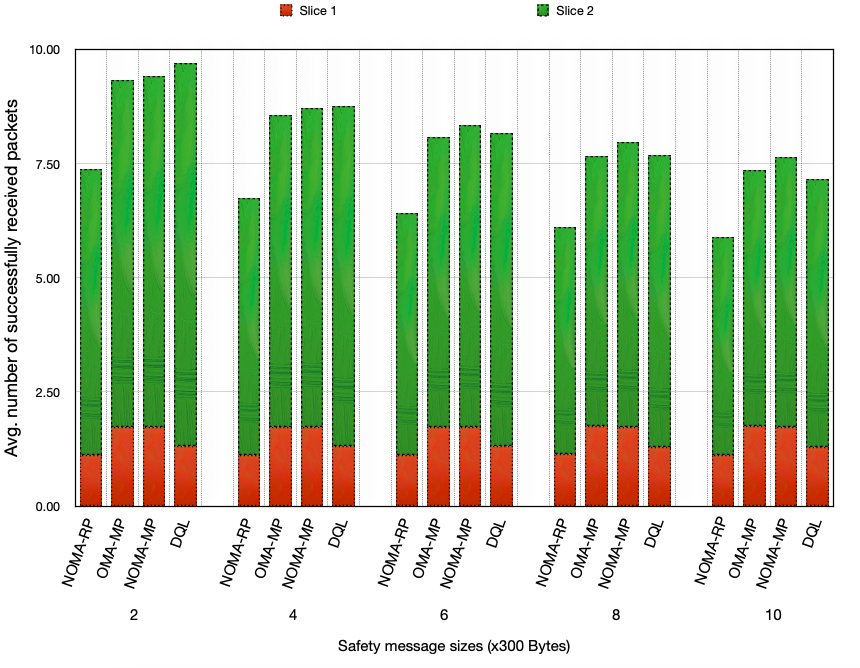}  
  \caption{Impact of safety message sizes for the (5,4,2,10)-network.}
  \label{fig:1b}
\end{figure}
\begin{figure}[h!]
  \centering
  \captionsetup{justification=centering,margin=2cm}
  \includegraphics[width=0.8\textwidth]{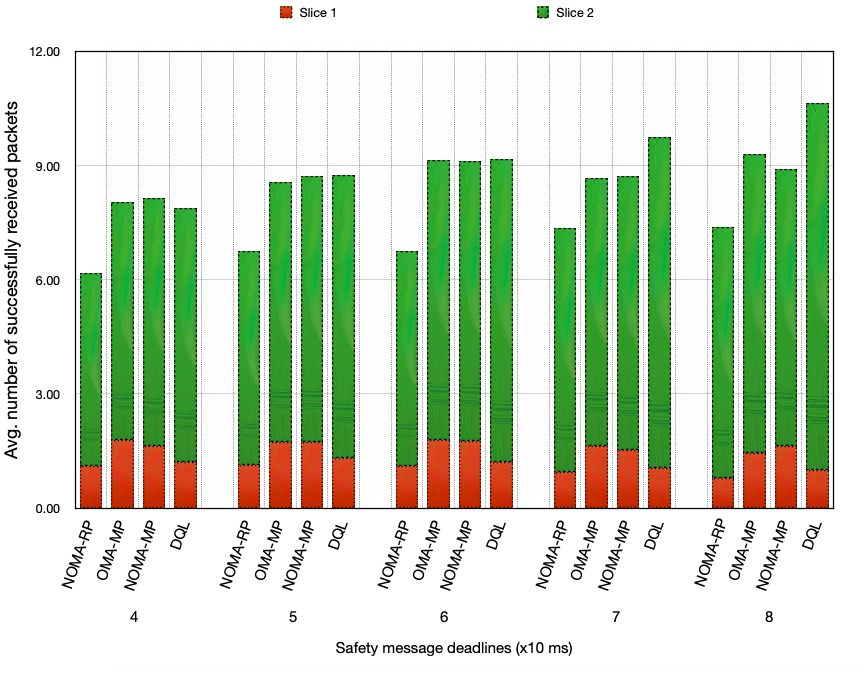}
  \caption{Impact of safety message deadlines for the (5,4,2,10)-network.}
  \label{fig:1c}
\end{figure}
\begin{figure}[h!]
  \centering
  \captionsetup{justification=centering,margin=2cm}
  \includegraphics[width=0.8\textwidth]{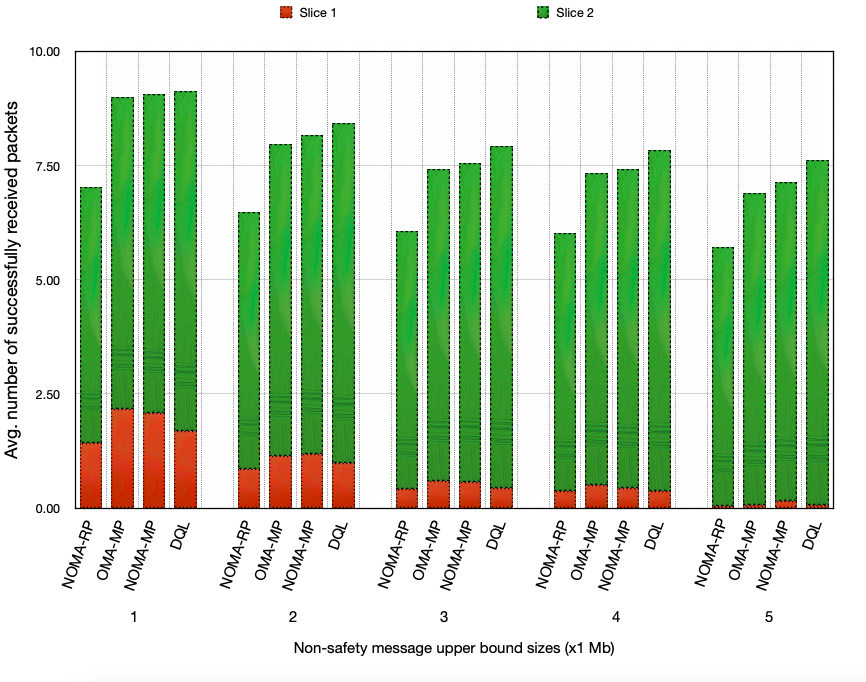}
  \caption{Impact of non-safety message sizes for the (5,4,2,10)-network.}
  \label{fig:1d}
\end{figure}
\begin{figure}[h!]
  \centering
  \captionsetup{justification=centering,margin=1cm}
  \includegraphics[width=0.8\textwidth]{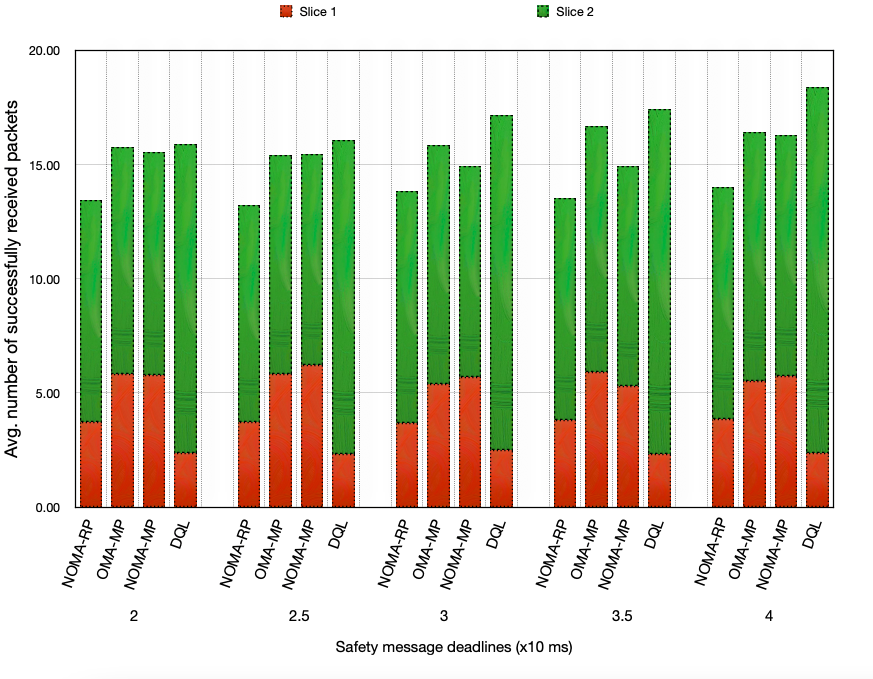}  
  \caption{Impact of safety message deadlines for the (6,4,4,20)-network.}
  \label{fig:2a}
\end{figure}

Fig.~\ref{fig:1b} shows the performance of DQL compared to the benchmarks by varying the packet size of slice 2. We can see that DQL manages to deliver more packets despite the fact that it is distributed and does not have the full and future CSI as in the benchmarks. We can see that NOMA-MP also achieves good performance and has the highest performance among all benchmarks. NOMA-RP achieves the lowest performance, as expected. In addition, DQL achieves a higher number of successfully delivered safety packets compared to non-safety packets. This is especially important in V2V communication because safety packets must have a higher priority to be delivered. We conclude that DQL achieves good performance compared to the benchmarks despite being distributed and runs online without seeing the future.

Fig.~\ref{fig:1c} shows the performance of DQL compared to the benchmarks when varying the deadlines of slice 2 packets. DQL always achieves the best performance when the safety packet deadlines increase. The gap between DQL and the other benchmarks further improves as the deadlines increase. We also notice that NOMA-RP has the worst performance among all algorithms, indicating the need for an appropriate power allocation method in the considered NOMA resource allocation problem.

In Fig.~\ref{fig:1d}, we present the effect of the size of non-safety messages on the performance of DQL. We can see that even if the size of the non-safety packets changes, DQL still performs better compared to other algorithms. This illustrates the effectiveness of the proposed multi-agent method and shows its robustness. Of course, if the size of the non-safety packets increases further, the number of successfully delivered packets will decrease since the safety packets must also be delivered. We conclude that there exists some form of fairness between the safety and non-safety slices that should be carefully studied, which we will do in our future work.

In Fig.~\ref{fig:2a}, we show the performance of DQL in the case of a larger vehicle network, i.e., the (6,4,4,20)-network, with stricter latency requirements (more agents, more RBs and shorter time-slot). Indeed, this figure shows that even after significant system modifications, the proposed DQL multi-agent method is robust and able to guarantee better performance compared to unrealistic centralized benchmarks. We note that a larger vehicle network allows the non-safety slice to deliver more non-safety packets because more resources are available. Thus, DQL better optimizes the utilization of individual RBs to deliver more packets.

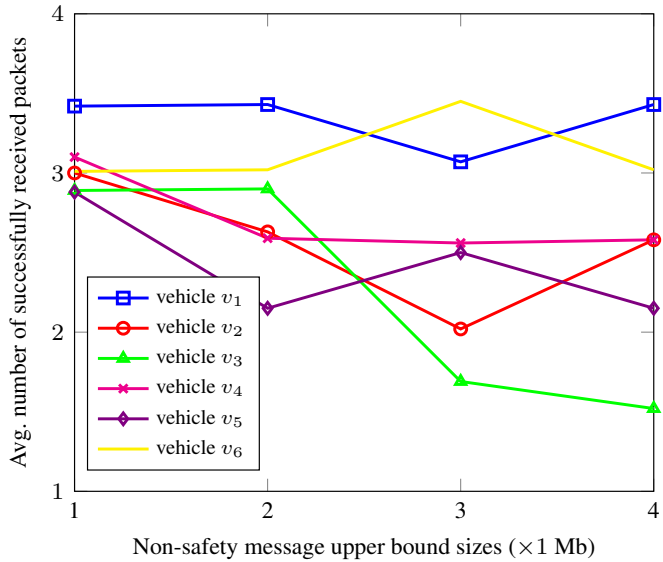
\begin{figure}[h!]
  \centering
  \captionsetup{justification=centering,margin=2cm}
  \resizebox{.75\textwidth}{!}{%
    \begin{tikzpicture}[
        every axis/.style={
        xlabel={Non-safety message upper bound sizes ($\times1$ Mb)},
        ylabel={Avg. number of successfully received packets},
        ymin=1,ymax=4,
        xticklabels={1,2,3,4},
        xtick={1,2,3,4},
        xmin=1,xmax=4,
        x label style={font=\footnotesize},
        y label style={font=\footnotesize}, 
        ticklabel style={font=\footnotesize},
    }]
    \begin{axis}[legend style={at={(0.32,0.45)},font=\scriptsize}]
    \addplot[blue,mark=square,line width=1pt] coordinates {
        (1, 3.4199998) (2, 3.4299998) (3, 3.0700002) (4, 3.4299998)
    };
    \addplot[red,mark=o,line width=1pt] coordinates {
        (1, 3.0) (2, 2.6299999) (3, 2.02) (4, 2.58)
    };
    \addplot[green,mark=triangle,line width=1pt] coordinates {
        (1, 2.8899999) (2, 2.9)  (3, 1.6899999) (4, 1.52)
    };
    \addplot[magenta,mark=x,line width=1pt] coordinates {
        (1, 3.1) (2, 2.59) (3, 2.56) (4, 2.58)
    };
    \addplot[violet,mark=diamond,line width=1pt] coordinates {
        (1, 2.88) (2, 2.1499999) (3, 2.5) (4, 2.1499999)
    };
    \addplot[yellow,line width=1pt] coordinates {
        (1, 3.01) (2, 3.02) (3, 3.4499998) (4, 3.02)
    };
    \legend{vehicle $v_1$, vehicle $v_2$, vehicle $v_3$, vehicle $v_4$, vehicle $v_5$, vehicle $v_6$}
    \end{axis}
    \end{tikzpicture}
  }
  \caption{Individual number of packets for the (6,4,4,20)-network..}
  \label{fig:2b}
\end{figure}
Fig.~\ref{fig:2b} shows the average number of packets delivered by each agent when the size of the non-safety packets varies. Since the non-safety packet requirements are low, each agent delivers almost the same number of packets (except for vehicle $v_1$, all other vehicles deliver almost $3$ packets). This is beneficial since all agents will have almost the same load and thus DQL provides a load balanced solution. As the size of the non-safety packets increases, the requirements are more stringent, and thus to maximize the total number of packets successfully delivered, each DQL agent will deliver a different number of packets, thus unbalancing their loads.

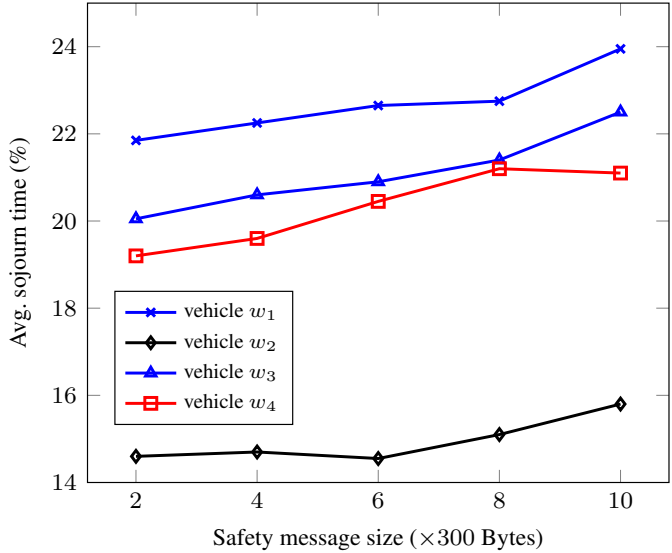
\begin{figure}[h!]
  \centering
  \captionsetup{justification=centering,margin=2cm}
  \resizebox{.75\textwidth}{!}{%
   \begin{tikzpicture}[
    every axis/.style={
        xlabel={Safety message size ($\times300$ Bytes)},
        ylabel={Avg. sojourn time (\%)},
        ymin=14,
        ymax=25,
        xtick={2,4,6,8,10},
        x label style={font=\footnotesize},
        y label style={font=\footnotesize}, 
        ticklabel style={font=\footnotesize},
    }]
    \begin{axis}[legend style={at={(0.2,0.4)},font=\scriptsize,anchor=north,},]
    \addplot[blue,mark=x,line width=1pt] coordinates {
         (2, 21.85) (4, 22.25) (6, 22.65) (8, 22.75) (10, 23.95)    
    };\legend{vehicle $w_1$,vehicle $w_2$,vehicle $w_3$,vehicle $w_4$}
    \addplot[black,mark=diamond,line width=1pt] coordinates {
        (2, 14.60) (4, 14.70) (6, 14.55) (8, 15.10) (10, 15.80)
    };
    \addplot[blue,mark=triangle,line width=1pt] coordinates {
         (2, 20.05) (4, 20.60) (6, 20.90) (8, 21.40) (10, 22.5)
    };
    \addplot[red, mark=square,line width=1pt] coordinates {
        (2, 19.2) (4, 19.6) (6, 20.45) (8, 21.2) (10, 21.1)
    };
    \end{axis}
    \end{tikzpicture}
  }
  \caption{Impact of safety message sizes on sojourn time for the (6,4,4,20)-network.}
  \label{fig:2c}
\end{figure}
In Fig.~\ref{fig:2c}, we present the average DQL sojourn time obtained by each receiving vehicle under the large vehicle network scenario (i.e., the (6,4,4,20)-network). Sojourn time is a metric used in mobile networks to measure ``the expected duration of stay of a mobile node in a particular serving cell''. In our system model, we compute the sojourn time with respect to each receiving vehicle $w$ as the expected time that $w$ remains in the coverage of a particular transmitting vehicle. In other words, once an agent chooses a coverage area in which it communicates with $w$, the time will be incremented until the first time $w$ is no longer served by that agent. This provides insight into the handoff that receiving vehicles experience. We see in Fig.~\ref{fig:2c} that three out of four receiving vehicles have an average sojourn time of about 20\%. Over a time horizon of $100$ ms, each receiving vehicle stays in communication with the same transmitting vehicle on average $20$ ms. This is due to the high mobility scenario considered. 

In order to increase the sojourn time of each receiving vehicle, it is necessary for our proposed DRL framework to undergo major changes: to develop a two-time-scale DQL algorithm that acts successively in two different time scales. That is, in the first time scale (of the order of $100$ ms--time horizon), the agents must choose their coverage communication ranges. Then, they must choose the slice selection decision, RB and power allocation in a shorter time-slice scale (on the order of $5$ ms). This can be done by implementing two DQNs (for each agent) that are connected in series. The study of this alternative is left for our future work.

\section{Conclusions}\label{sec:cl}
In this paper, we have developed an online distributed scheme to solve the problem of network slicing, coverage selection, resource block and power allocation in vehicular networks using non-orthogonal multiple access. We provided a mathematical programming formulation of the problem and an explicit NP-hardness analysis. We then modeled the problem as a multi-agent Markov decision process to develop a multi-agent deep reinforcement learning algorithm that is based on the well-known fingerprinting method known as deep Q learning (DQL). Our DQL algorithm utilized the most recent advances in deep reinforcement learning, including the duel architecture and prioritized experience replay memory. The proposed DQL algorithm has been shown to be robust and effective against various system parameters, including the high mobility characteristics of vehicular networks. It also outperformed baseline benchmark algorithms that are based on global and centralized decisions. In our future work, we will investigate the hierarchical two-time scale approach to solve the problem by proposing a different deep network architecture. In addition, we will consider the case of multiple slices rather than two and study the fairness between different slices.

\bibliographystyle{spmpsci}
\bibliography{IEEEabrv,M2Mv1}

\end{document}